\documentclass{llncs}
\usepackage[utf8]{inputenc}
\usepackage{enumitem}

\makeatletter
\newcommand{\@chapapp}{\relax}%
\makeatother

\usepackage{amsfonts}
\usepackage{amsmath}


\usepackage{amsthm}
\usepackage{amssymb}
\usepackage{graphicx}
\usepackage{microtype}
\usepackage{wrapfig}
\usepackage[export]{adjustbox}
\usepackage{subfig}
\usepackage{array}
\usepackage{tabularx}
\usepackage{xparse}

\usepackage{algorithm}
\usepackage[noend]{algpseudocode}
\floatname{algorithm}{Procedure}



\newcommand{\tuple}[1]{\langle #1 \rangle}

\newcommand{\ceil}[1]{\left \lceil #1 \right \rceil }

\newcommand{\impl}{\Rightarrow}

\usepackage{letltxmacro}
\LetLtxMacro{\originaleqref}{\eqref}
\renewcommand{\eqref}{Eq.~\originaleqref}

\usepackage[hypertexnames=false]{hyperref}
\usepackage[%
    style=numeric-comp,sorting=nyt,
    sortcites=true,doi=false,url=false,isbn=false,
    giveninits=true,hyperref,backend=bibtex,bibencoding=ascii,maxbibnames=99]{biblatex}
\addbibresource{lin-2col.bib}


\title{Linear-time algorithm for vertex 2-coloring without monochromatic triangles on planar graphs}
\titlerunning{Linear-time algorithm for vertex 2-coloring without monochromatic triangles on planar graphs}
\author{Micha\l~Karpi\'nski, Krzysztof~Piecuch}
\authorrunning{Micha\l~Karpi\'nski, Krzysztof~Piecuch}
\date{}
\institute{Institute of Computer Science, University of Wroc\l aw \\
Joliot-Curie 15, 50-383 Wroc\l aw, Poland \\
\email{\{karp,kpiecuch\}@cs.uni.wroc.pl}}

\begin{document}

  \maketitle

  \begin{abstract}
    In the problem of 2-coloring without monochromatic triangles (or triangle-tree 2-coloring), vertices of the simple, connected, undirected graph
    are colored with either {\em black} or {\em white} such that there are no 3 mutually adjacent vertices of the same color.
    In this paper we are positively answering the question posed in our previous work, namely, if there exists an algorithm
    solving 2-coloring without monochromatic triangles on planar graphs with linear-time complexity.
  \end{abstract}

  \section{Introduction}

  Vertex coloring problem and its variations have been studied extensively for many decades. In the classic coloring problem
  we are tasked to find the coloring of vertices such that there are no two adjacent vertices of the same color. Theoretical and practical applications of graph
  coloring spark interest in the research community to this day. One of the variations revolves around the idea of forbidding monochromatic subgraphs.
  In this paper we focus on monochromatic triangles, i.e., three mutually adjacent vertices of the same color. The motivation comes from the field of economy,
  namely, in the study of consumption behavior \cite{deb2008efficient}.

  In our previous paper \cite{karpinski2018vertex} we proposed a new structural parameter $\chi_3(G)$ which is the minimum
  number of colors needed to label the vertices of an undirected, simple graph $G$, such that there are no monochromatic triangles.
  We showed efficient, linear-time algorithms for finding $\chi_3(G)$ (and triangle-free $\chi_3(G)$-coloring) for several classes of graphs,
  including outerplanar graphs, chordal graphs and graphs with bounded maximum degree $\Delta$, with $\Delta \leq 4$.
  Our approach was based on the so called {\em standard recoloring strategy},
  where most of the work is relegated to finding the classic coloring with optimal $k=\chi(G)$ number of colors (where $\chi(G)$
  is the chromatic number). After that, we label $V_i$ to be the set of vertices colored $i$, for $1 \leq i \leq k$.
  Next, for each $0 \leq j \leq \ceil{k/2}-1$ we recolor sets $V_{2j+1} \cup V_{2j+2}$ with $j$ (we may need to add empty set $V_{k+1}$,
  if $k$ is odd). Since every $V_i$ is an independent set, then after recoloring, any monochromatic cycle is of even length.
  Therefore the resulting coloring is triangle-free. We note that similar technique can be found in the literature,
  for example, in the work of Stein \cite{stein1971b}. We can see the drawback of this framework: efficient algorithms for finding
  optimal coloring without monochromatic triangles depend on the existence of efficient algorithms for finding optimal (or near-optimal) classic coloring. 

  We say that the graph is {\em planar}, if we can embed it in the plane, i.e., it can be drawn on the plane in such a way that its edges intersect
  only at their endpoints. The application of standard recoloring strategy fails to provide a linear-time algorithm for finding 2-coloring
  without monochromatic triangles on planar graphs, as the best known algorithm for finding classic 4-coloring on planar graphs runs
  in quadratic time \cite{papadimitriou1981clique}. The situation does not improve even if the input graph is classically 3-colorable,
  as the best known algorithm for classically 4-coloring planar graphs with this property still runs in $O(n^2)$ time \cite{kawarabayashi2010simple}.

  In this paper we are closing the gap left in our previous work by presenting a linear-time algorithm for triangle-free 2-coloring on planar graphs
  that does not rely on the standard recoloring strategy.

  \subsection{Related work}

  In the field of graph coloring, the class of planar graphs has been of particular interest. Several interesting results that
  coincide with our research have emerged, for example, Angelini and Frati \cite{angelini2012acyclically}
  study planar graphs that admit an acyclic 3-coloring -- a proper coloring in which every 2-chromatic subgraph is acyclic.
  Another result is of Kaiser and \v{S}krekovski \cite{kaiser2004planar}, where they prove that every planar graph
  has a 2-coloring such that no cycle of length 3 or 4 is monochromatic. Thomassen \cite{thomassen20082}
  considers list-coloring of planar graphs without monochromatic triangles.

  Several hardness results for our problem also exist (for non-planar cases)
  and can be found in our previous papers \cite{karpinski2017vertex,karpinski2018vertex} and also in the work of Shitov \cite{shitov2017tractable}.
  It is worth noting that the hardness results for classic coloring provides additional motivation for our work. 
  We have efficient algorithms for finding $\chi_3$ on planar graphs and graphs with $\Delta=4$,
  but it is $\mathcal{NP}$-hard to find the $\chi$ even on 4-regular planar graphs \cite{dailey1980uniqueness}.
  This serves as an evidence, that the triangle-free coloring problem is easier than the classic coloring
  problem and we prove it in this paper for planar case.

  \subsection{Structure of the paper}

  In Section 2 we formalize the main problem and provide necessary definitions and notations. Section 3 contains the main contribution, i.e.,
  the linear-time algorithm for triangle-free 2-coloring on planar graphs. We leave concluding remarks in Section 4.

  \section{Preliminaries}

  In the previous section we mentioned two distinct types of colorings. Here we formalize their definitions.
  Let $G=(V,E)$ be a graph with vertex set $V$ and edge set $E$.
  A {\em classic k-coloring} of a simple graph is a function $c : V \rightarrow \{1,\dots, k\}$, such that there are no two
  adjacent vertices $u$ and $v$, for which $c(u)=c(v)$. Given $G$, the smallest $k$ for which there exists a classic $k$-coloring
  for $G$ is called the {\em chromatic number} and is denoted as $\chi(G)$. A {\em triangle-free k-coloring} of a simple graph is a function
  $c : V \rightarrow \{1,\dots, k\}$, such that there are no three mutually adjacent vertices $u$, $v$ and $w$, for which $c(u)=c(v)=c(w)$.
  If such vertices exist, then the induced subgraph ($K_3$) is called a {\em monochromatic} triangle.
  Given $G$, the smallest $k$ for which there exists a triangle-free $k$-coloring for $G$ we call the {\em triangle-free chromatic number} and
  we denote it as $\chi_3(G)$. If $k=2$ we define $\overline{c(v)}=1$, if $c(v)=2$, and $\overline{c(v)}=2$, if $c(v)=1$.

  We use standard definitions and notations from graph theory. We remind the reader of the ones we use in this paper.
  A planar graph is {\em maximal}, if connecting any two vertices in it with a new edge
  makes it non-planar. This implies that every face of a maximal planar graph is a triangle.
  A triangle in the planar graph embedding that is not a face we call an {\em inner triangle}.
  We can make any planar graph to be maximal in linear-time \cite{read1986new,hagerup1991triangulating}.
  We assume from now on that we are solving the triangle-free coloring problem for maximal planar graphs only.
  We can do this, because every coloring (triangle-free coloring) of a maximal planar graph
  is also a coloring (triangle-free coloring) of the original planar graph.
  A {\em dual graph} of a planar graph $G$ is a graph that has a vertex for each face of $G$.
  The dual graph has an edge whenever two faces of $G$ are separated from each other by an edge.
  
  For any input planar graph presented in this paper, we assume that we are also given an embedding of such graph on a plane,
  where every segment is straight. It can be computed in linear time \cite{nishizeki1988planar},
  so it is safe to make such assumption.

  A graph $G$ is $k$-edge-connected if the minimum number of edges whose deletion disconnects $G$ is at least $k$.
  A graph $G$ is $k$-regular if every vertex in $G$ is of dergee $k$. A single edge
  whose removal disconnects a graph is called a {\em brigde}. A edge set $M \subset E$ is a matching,
  if every vertex from $V$ is incident to at most one edge in $M$. A matching $M$ is called {\em perfect} if every vertex is incident to some edge in $M$.
  An {\em augmenting path} for a matching $M$ is a path $\bar{e}$ with an odd number of edges $e_1,\dots,e_p$ such that
  it begins and ends at vertices which are not covered by $M$ and all edges from the set $\{e_2,e_4,\dots\}$ belong to $M$.
  The symmetric difference of $M$ and $\bar{e}$ yields a matching
  having one more edge than $M$. Augmenting paths are used in the maximum matching algorithms,
  for example in the Hungarian Algorithm \cite{kuhn1955hungarian}.

  For brevity, when graph $G=(V,E)$ is given, we assume that $|V|=n$ and $|E|=m$.

  \section{The Algorithm}

  Here we give the main result. We present the algorithm for triangle-free 2-coloring of planar graphs in three steps:

  \begin{enumerate}[label=\Roman*.]
    \item We show the algorithm, assuming that there are no inner triangles in the graph, then
    \item we show the algorithm with additional assumption, that the outer face of the graph is already colored, and finally
    \item we combine the previous results to find the triangle-free 2-coloring in any maximal planar graph in linear-time.
  \end{enumerate}

  \subsection{Algorithm I}

  Let $G=(V,E)$ be a simple, maximal planar graph, where $|V|>4$ (the other case is trivial), such that there are no inner triangles in $G$.
  Let $G'=(V',E')$ be a graph that is dual to $G$. Graph $G'$ has the following properties:

  \begin{enumerate}
    \item it is planar, because $G$ is planar,
    \item has no bridges, because $G$ has no loops,
    \item it is 3-regular, because every face in $G$ is a triangle,
    \item has no loops or multi-edges, because $G$ is $3$-edge-connected (because $G$ is simple, maximal and $|V|>4$).
  \end{enumerate}

  Petersen \cite{petersen1891theorie} proved that every 3-regular bridgeless graph has a perfect matching. Biedl et al. \cite{biedl2001efficient}
  showed how to find such matching in linear-time, assuming that the graph is planar. Graph $G'$ is planar, 3-regular and bridgeless, therefore
  we can use the aforementioned result to find a perfect matching $M$ in $G'$ in $O(n)$ time. From now on we assume that one such matching $M$ is already computed.

  \begin{definition}\label{def:bicolor}
    An edge $e \in E$, which is dual to the edge $e' \in E'$ such that $e' \not\in M$ is called {\em bicolor}.
  \end{definition}

  \begin{figure}[!t]
    \centering
    \includegraphics[scale=1.0]{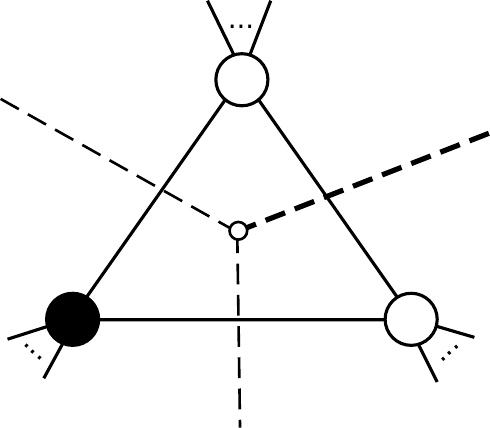}
    \caption{A face of some maximal planar graph and its dual edges. Coloring of the original graph is determined by the dual edge which belongs to the matching (bold dashed line).}
    \label{fig:ex1}
  \end{figure}

  \begin{example}
    The idea of the algorithm is to color vertices of the bicolor edges with two different colors. For example, consider a single face of some
    maximal planar graph as shown in Figure \ref{fig:ex1}. Dashed lines represent dual edges (edges from $G'$) and one of them (bold one) belongs
    to the perfect matching $M$. Notice that only one of the dual edges in the figure can be in $M$, therefore for this face (and any other face) there
    can be only one non-bicolor edge in the original graph, therefore two other edges in the face are bicolor and we can properly color the face
    (in triangle-free sense). Sample coloring is shown as a black/white fill.
  \end{example}

  \begin{definition}\label{def:consistent}
    Let $v$ be arbitrarily chosen from $V$. A coloring $c : V \rightarrow \{1,2\}$ is consistent with $M$ iff $c(v) = 1$, and for each $u\not=v$:

    \begin{itemize}
      \item $c(u) = 1$, if there is a simple path from $v$ to $u$ that contains even number of bicolor edges,
      \item $c(u) = 2$, if there is a simple path from $v$ to $u$ that contains odd number of bicolor edges.
    \end{itemize}
  \end{definition}

  \begin{lemma}
    Coloring $c$ (of $G$) that is consistent with $M$ is well-defined.
  \end{lemma}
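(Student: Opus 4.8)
The plan is to show that the map $c$ assigns to each vertex exactly one color, i.e.\ that for every $u \neq v$ precisely one of the two defining conditions holds. Existence is immediate: since $G$ is connected there is at least one simple path from $v$ to $u$, and the number of bicolor edges on it is either even or odd, so $c(u)$ receives at least one value. The real content is uniqueness: I must rule out the possibility that some $u$ admits both a simple $v$--$u$ path $P_1$ carrying an even number of bicolor edges and a simple $v$--$u$ path $P_2$ carrying an odd number. I would reduce this to a statement about cycles. The symmetric difference $P_1 \triangle P_2$ is a subgraph in which every vertex has even degree, hence it decomposes into edge-disjoint cycles, and the number of bicolor edges it contains is congruent mod $2$ to the sum of the bicolor counts of $P_1$ and $P_2$, which is odd. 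Thus at least one of these cycles would contain an odd number of bicolor edges. So it suffices to prove the claim that \emph{every cycle of $G$ contains an even number of bicolor edges}.

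To prove the claim I would exploit planarity together with the matching $M$. First note the local fact that drives everything: each face of $G$ is a triangle, so its dual vertex has degree $3$ in $G'$, and since $M$ is a perfect matching exactly one of the three incident dual edges lies in $M$. Hence every triangular face has exactly one non-bicolor edge and therefore exactly two bicolor edges on its boundary. Now fix a simple cycle $C$; by the Jordan curve theorem it splits the plane into a bounded and an unbounded region, and let $F_{\mathrm{in}}$ be the set of (triangular) faces lying in the bounded region. I would count in two ways the quantity $\sum_{f \in F_{\mathrm{in}}} b(f)$, where $b(f)$ is the number of bicolor edges on the boundary of $f$. On one hand each $b(f) = 2$, so the sum equals $2\,|F_{\mathrm{in}}|$. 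On the other hand each bicolor edge strictly inside $C$ borders two faces of $F_{\mathrm{in}}$ and is counted twice, each bicolor edge of $C$ borders exactly one face of $F_{\mathrm{in}}$ and is counted once, and bicolor edges outside $C$ are not counted. Equating the two expressions gives that the number of bicolor edges on $C$ equals $2\,|F_{\mathrm{in}}|$ minus twice the number of bicolor edges strictly inside $C$, which is even, proving the claim.

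I expect the main obstacle to be the geometric bookkeeping in the claim: justifying cleanly that an edge of $G$ is incident to one inside face and one outside face if and only if it lies on $C$, so that the double count splits exactly into the three cases above. This relies on the fixed straight-line planar embedding and on every bounded face being a triangle, with the outer face always lying in the unbounded region outside the simple cycle under consideration. The reduction in the first paragraph is routine once one is comfortable that the symmetric difference of two simple paths is an even subgraph, and the parity computation there is elementary. Combining the two parts, all simple $v$--$u$ paths carry bicolor edges of the same parity, so $c(u)$ is determined unambiguously and $c$ is well-defined.
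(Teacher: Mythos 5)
Your proof is correct, and it reaches the same pivotal claim as the paper --- every simple cycle of $G$ contains an even number of bicolor edges --- but establishes it by a genuinely different route. The paper's argument works in the dual: deleting $M$ from the $3$-regular graph $G'$ leaves a $2$-regular graph, hence a disjoint union of simple cycles, and each such dual cycle crosses any simple cycle of $G$ an even number of times (a Jordan-curve parity fact), so the bicolor edges of any cycle of $G$, being exactly the edges dual to $G'\setminus M$, come in even number. You instead argue primally, by double counting: since $M$ is a perfect matching in a $3$-regular dual, every triangular face has exactly two bicolor edges, so summing over the faces enclosed by a cycle $C$ gives an even total, and since interior bicolor edges are counted twice while those on $C$ are counted once, the count on $C$ is even. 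Your version is more elementary and arguably more self-contained: it avoids the crossing-parity assertion that the paper states without proof, and your reduction from two paths of differing parity to a cycle of odd parity via the symmetric difference (an even subgraph decomposing into edge-disjoint cycles) is more careful than the paper's bare claim that two such paths yield a simple cycle with an odd bicolor count. The price is the geometric bookkeeping you flag yourself --- that an edge off $C$ has both incident faces on the same side of $C$ while an edge of $C$ has one on each side --- which is standard for a straight-line embedding. One small point you omit that the paper includes as an aside: the arbitrary choice of the base vertex $v$ in the definition only changes the coloring up to a global swap of the two colors, which does not affect well-definedness for a fixed $v$ and preserves the triangle-free property.
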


  \begin{proof}
    Graph $G$ is maximal, therefore it is connected, which means that for each pair of vertices $u,v \in V$, there exist a simple path from $u$ to $v$.
    Assume, by contradiction, that there exist two different simple paths from $u$ to $v$, one of them containing odd number of bicolor edges and the
    other one containing even number of bicolor edges. Therefore there exist a simple cycle in $G$ that contains odd number of bicolor edges. Remove
    all edges of $M$ from $G'$. $G'$ becomes 2-regular, therefore it consists of only simple cycles. Each simple cycle in $G'$ crosses each simple cycle in $G$
    even number of times, i.e., each simple cycle in $G'$ contains even number of edges that are dual to some edges in every simple cycles of $G$.
    Therefore each simple cycle in $G$ contains even number of bicolor edges, a contradiction. Additionally, the choice of $v$ in Definition \ref{def:consistent}
    is irrelevant, as every triangle-free coloring where $c(v)=1$ can be made into a triangle-free coloring where $c(v)=2$ by simply flipping the color of all vertices.
  \end{proof}

  The algorithm for obtaining coloring that is consistent with $M$ is a simple {\em depth-first search} procedure, which we will call {\em Algorithm I}. 
  First, we choose unmarked vertex $v$ arbitrarily from $G$, and we mark it, and we set $c(v)=1$. Then for each unmarked neighbour $u$ of $v$ we determine
  if the $(u,v)$ is bicolor, and we color $u$ accordingly, i.e., if $(u,v)$ is bicolor then $c(u)=\overline{c(v)}$, otherwise $c(u)=c(v)$.
  Then, we recursively run the procedure on $u$. Algorithm stops when all vertices are marked. We now prove the correctness of this algorithm.

  \begin{lemma}\label{lma:correct1}
    Coloring $c$ (of $G$) that is consistent with $M$ is triangle-free.
  \end{lemma}

  \begin{proof}
    Assume by contradiction that under the coloring $c$ there exist a monochromatic triangle in $G$.
    This triangle is a face, since there are no inner triangles in $G$.
    This implies that no edge in the triangle is bicolor. Let $v'$ be the node representing this triangle in $G'$.
    By Definition \ref{def:bicolor} all edges incident to $v'$ are in $M$. This contradicts the fact that $M$ is a matching.
  \end{proof}

  \begin{theorem}\label{thm:1}
    Given a maximal planar graph $G=(V,E)$ which contains no inner triangles, we can find a triangle-free 2-coloring for $G$ in $O(n)$ time.
  \end{theorem}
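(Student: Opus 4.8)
The plan is to assemble the components already in place and then check that every stage runs in linear time. The statement packages three facts: that Algorithm I terminates with a coloring consistent with $M$, that any such coloring is triangle-free, and that the whole pipeline is $O(n)$. Triangle-freeness is exactly Lemma~\ref{lma:correct1}, so the remaining work is to justify that the depth-first search really produces the consistent coloring and to carry out the time bookkeeping.

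First I would confirm that the coloring returned by Algorithm I satisfies Definition~\ref{def:consistent}. The search builds a spanning tree rooted at the start vertex $v$ with $c(v)=1$; each newly visited vertex $u$ receives $\overline{c(p)}$ when the edge to its parent $p$ is bicolor, and $c(p)$ otherwise. A straightforward induction on tree depth shows that $c(u)=1$ precisely when the tree path from $v$ to $u$ carries an even number of bicolor edges, and $c(u)=2$ otherwise. Because this tree path is one particular simple path, and the preceding well-definedness lemma guarantees that all simple paths from $v$ to $u$ agree on the parity of their bicolor edges, the output coincides with the coloring of Definition~\ref{def:consistent}. Lemma~\ref{lma:correct1} then yields triangle-freeness.

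Next I would bound the running time stage by stage. Reducing the input to a maximal planar graph and computing a straight-line embedding are both linear, as noted in the preliminaries. By Euler's formula a maximal planar graph has $m=3n-6$ edges and $2n-4$ faces, so its dual $G'$ has $O(n)$ vertices and edges and is constructed from the embedding in $O(n)$ time. Since $G'$ is planar, $3$-regular and bridgeless, the perfect matching $M$ is found in $O(n)$ time by Biedl et al.~\cite{biedl2001efficient}. Finally the search touches each vertex and each of the $m=O(n)$ edges of $G$ a constant number of times, so it also runs in $O(n)$. Summing the stages gives the claimed bound.

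The step I expect to require the most care is the constant-time bicolor test inside the search: if deciding whether an edge $(u,v)$ is bicolor were not $O(1)$, the traversal would exceed linear time. To secure this I would, during the dual construction, store explicit cross-pointers between each primal edge and its dual and flag the dual edges lying in $M$; checking the bicolor status of $(u,v)$ then reduces to a single lookup. With this bookkeeping in place every stage is linear and the theorem follows.
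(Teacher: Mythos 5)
Your proposal is correct and follows essentially the same route as the paper: correctness via Lemma~\ref{lma:correct1}, and a stage-by-stage linear-time accounting of the dual construction, the Biedl et al.\ matching, and the DFS with constant-time bicolor lookups via stored cross-pointers. Your explicit induction showing that the DFS output actually satisfies Definition~\ref{def:consistent} is a small step the paper leaves implicit, but it does not change the argument.
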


  \begin{proof}
    By Lemma \ref{lma:correct1}, Algorithm I returns a triangle-free coloring, therefore we only need to prove the time complexity.
    Dual graph $G'$ can be easily computed in $O(n)$, assuming the embedding of $G$ is represented using DCEL (doubly connected edge list) data structure \cite{de1997computational}.
    Matching $M$ can be found in $O(n)$ \cite{biedl2001efficient}.
    The DFS procedure of Algorithm I runs in $O(n+m)$ time, assuming that we prepared the data structures such that: 1) checking if an edge belongs to $M$ takes constant time, which
    can be done by labeling each edge during the computation of $M$, and 2) determining the dual edge in $G'$ for some edge in $G$ takes constant time, which can
    be achieved by storing additional pointer with each edge in $G$ during the computation of $G'$.
    The fact that $G$ is planar -- and therefore $m \in O(n)$ -- concludes the proof.
  \end{proof}

  \subsection{Algorithm II}

  Let $G=(V,E)$ be the same as in Algorithm I, and assume that $u,v,w \in V$ are vertices of the outer face of $G$. Additionally, assume
  that $c(u)$, $c(v)$ and $c(w)$ is given, where $c$ is partial assignment of $\{1,2\}$ into $V$. We will now present an algorithm for extending
  $c$ to the entire set $V$, such that $c$ is triangle-free. Without loss of generality assume that $c(u)=c(v)=\overline{c(w)}$.

  Let $G'=(V',E')$ be dual to $G$ and let $e'=(u',v')$ be dual to $e=(u,v)$. If $e'$ belongs to some perfect matching $M$ of $G$, then $e$ is not bicolor,
  and therefore the other two edges of the outer face are bicolor, therefore Algorithm I can extend partial coloring $c$ to be consistent with $M$.
  Generalized Petersen's Theorem \cite{schonberger1934} guarantees that such $M$ exists. We show how to find such matching in linear time.

  \begin{lemma}\label{lma:correct2}
    Let $G$,$G'$,$M$,$c$,$u,v,w$,$e$ and $e'$ be as discussed above. Perfect matching $M'$ of $G$, for which $e' \in M'$, can be computed in $O(n)$ time.
  \end{lemma}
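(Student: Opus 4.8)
The plan is to start from the perfect matching $M$ we already have and \emph{rotate} it along a single alternating cycle until it comes to contain $e'$. If $e' \in M$ we are done and set $M' := M$, so assume $e' \notin M$. Let $a$ and $b$ be the $M$-partners of $u'$ and $v'$, i.e. $(u',a),(v',b)\in M$; since $e'\notin M$ we have $a \ne v'$ and $b \ne u'$ (otherwise $e'$ itself would lie in $M$), and $a \ne b$ because $M$ is a matching.

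First I would establish that a suitable alternating cycle exists. Generalized Petersen's Theorem already guarantees a perfect matching $M^\ast$ of $G'$ with $e'\in M^\ast$; I use this only existentially. The symmetric difference $M \triangle M^\ast$ of two perfect matchings is a disjoint union of $M$-alternating cycles of even length, and since $e' \in M^\ast \setminus M$, the edge $e'$ lies on one of them, call it $C$. Deleting $u'$ and $v'$ from $C$ leaves an even alternating path from $a$ to $b$ whose two end edges are non-matching; this is exactly an $M$-augmenting path in $H := G' - \{u',v'\}$, where the restricted matching $M_H := M \setminus \{(u',a),(v',b)\}$ exposes precisely the two vertices $a$ and $b$. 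Hence an $a$--$b$ augmenting path in $H$ exists, and it cannot disturb $e'$ because $u'$ and $v'$ have been removed.

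The algorithm then performs a single augmenting-path search in $H$ starting from $a$: since $a$ and $b$ are the only exposed vertices, any augmenting path it returns is an $a$--$b$ path $P$, which exists by the previous paragraph. I would then set $M' := (M_H \triangle P) \cup \{e'\}$ and verify that it is a perfect matching of $G'$ containing $e'$: the set $M_H \triangle P$ is a perfect matching of $H$ (the augmentation covers the formerly exposed $a,b$ and leaves every other vertex matched), and adjoining $e'$ covers exactly $u'$ and $v'$, so no vertex is doubly covered. Locating $a,b$ and the dual correspondence is $O(1)$ using the data structures already prepared in the proof of Theorem~\ref{thm:1}.

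The main obstacle is making the single augmenting-path search run in linear time. In a general graph such a search must contract odd cycles (blossoms), and I expect this to be the only delicate point; a single augmenting-path search nonetheless costs $O(m)$, and since $G'$ is planar we have $m = O(n)$, giving an overall $O(n)$ bound. I would double-check that the blossom bookkeeping for one search is genuinely linear -- for instance by maintaining the contracted blossoms with a union--find structure and exploiting that $G'$ is cubic, so each vertex and edge is scanned only a bounded number of times -- since that is where an unwanted inverse-Ackermann factor could otherwise appear.
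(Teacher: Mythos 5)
Your proposal is correct and matches the paper's proof in all essentials: both start from an arbitrary perfect matching computed via Biedl et al., use the Generalized Petersen guarantee only existentially to certify that a single augmenting path exists, perform one augmentation, and then reinstate $e'$. The only cosmetic difference is that you delete $u'$ and $v'$ and search for an $a$--$b$ augmenting path, whereas the paper attaches pendant vertices $u^+$ and $v^+$ and searches for a $u^+$--$v^+$ augmenting path in the modified graph $G^+$ -- the two constructions are equivalent, and your closing caution about making a single blossom-based search genuinely linear is a point the paper handles simply by citation.
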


  \begin{proof}
    Find perfect matching $M'$ in $O(n)$ time \cite{biedl2001efficient}. If $e' \in M'$, then we are done. Assume that $e' \not\in M'$.
    Let $u^+$ and $v^+$ be a pair of new vertices. Construct a graph $G^+=(V^+,E^+)$,
    where $V^+=V' \cup \{u^+,v^+\}$ and $E^+ = E' \cup \{(u',u^+),(v',v^+)\} \setminus \{e'\}$. Graph $G^+$
    has the following properties:

    \begin{itemize}
      \item it has a perfect matching, for example, $M \cup \{(u',u^+),(v',v^+)\} \setminus \{e'\}$,
      \item $M'$ is a matching in $G^+$, because $e' \not\in M'$ and $e' \not\in E^+$,
      \item its perfect matching contains one more edge than $M'$, because $|V^+|=|V'|+2$,
      \item its perfect matching contains both $(u',u^+)$ and $(v',v^+)$, because those are the only
        edges incident to $u^+$ and $v^+$, respectively.
    \end{itemize}

    The above properties imply that there exists an augmenting path which extends $M'$ to a perfect
    matching $M^+$ of graph $G^+$. Goldberg and Tarjan \cite{goldberg1988new} shows how to compute such
    augmenting path in linear time. The perfect matching of $G'$ which contains $e'$ is then defined as
    $M^+ \setminus \{(u',u^+)$,$(v',v^+)\} \cup \{e'\}$. 
  \end{proof}

  \begin{theorem}\label{thm:2}
    Given a maximal planar graph $G=(V,E)$ which contains no inner triangles, for which the outer face $u,v,w$
    is colored such that $c(u)=c(v)=\overline{c(w)}$, we can find a triangle-free 2-coloring for $G$ in $O(n)$ time.
  \end{theorem}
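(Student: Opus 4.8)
The plan is to reduce Algorithm II to Algorithm I by choosing the perfect matching of the dual graph so that the coloring it induces is forced to agree with the prescribed colors on the outer face. First I would invoke Lemma~\ref{lma:correct2} to compute, in $O(n)$ time, a perfect matching $M'$ of the dual graph $G'$ that contains the edge $e'=(u',v')$ dual to $e=(u,v)$.

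The key observation is structural. Because $G$ is maximal planar, every face is a triangle and $G'$ is $3$-regular; in particular the outer face $u,v,w$ corresponds to a single vertex of $G'$ of degree exactly $3$, whose three incident dual edges are the duals of $(u,v)$, $(v,w)$ and $(u,w)$. Since $e'\in M'$ and $M'$ is a matching, the other two of these dual edges cannot lie in $M'$. By Definition~\ref{def:bicolor} this means that $e=(u,v)$ is \emph{not} bicolor, while $(v,w)$ and $(u,w)$ are both bicolor. Consequently any coloring consistent with $M'$ satisfies $c(u)=c(v)$ and $c(w)=\overline{c(u)}$, which is precisely the prescribed pattern $c(u)=c(v)=\overline{c(w)}$.

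I would then run the DFS of Algorithm I, but initialize it at the vertex $u$ and set $c(u)$ to the \emph{prescribed} value rather than to the fixed color $1$. The lemma preceding Algorithm I shows that a coloring consistent with $M'$ is well-defined up to a global flip of all colors, so the relative color of every vertex with respect to $u$ is determined; fixing the starting color at $u$ therefore selects the unique consistent coloring in the correct global orientation, and the resulting extension agrees with the given $c(u),c(v),c(w)$. By Lemma~\ref{lma:correct1} this coloring is triangle-free.

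Finally, the running time is the sum of the $O(n)$ matching computation from Lemma~\ref{lma:correct2} and the $O(n)$ DFS of Algorithm~I established in Theorem~\ref{thm:1}, which is $O(n)$ overall. I do not anticipate a substantial obstacle: the genuine technical work is already discharged by Lemma~\ref{lma:correct2} (forcing $e'$ into a perfect matching through the augmenting-path construction) and by Theorem~\ref{thm:1}. The only point requiring care is verifying that fixing the start color at $u$ — rather than using the default color $1$ — faithfully realizes the prescribed partial assignment, and this follows from the global-flip invariance noted above together with the degree-$3$ argument on the outer-face dual vertex.
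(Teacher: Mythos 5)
Your proposal is correct and follows essentially the same route as the paper: compute via Lemma~\ref{lma:correct2} a perfect matching of the dual containing the edge dual to $(u,v)$, observe that this forces $(u,v)$ to be non-bicolor and the other two outer-face edges to be bicolor, and then run the DFS of Algorithm~I seeded with the prescribed colors. The paper's own proof is a one-line reference to Lemma~\ref{lma:correct2} and Algorithm~I, relying on the discussion preceding that lemma; your write-up simply makes the degree-$3$ and global-flip details explicit.
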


  \begin{proof}
    Directly from Lemma \ref{lma:correct2} and by using Algorithm I.
  \end{proof}

  \subsection{Algorithm III}

  In this section we construct a linear-time algorithm for computing a triangle-free 2-coloring of a given maximal, planar graph $G$.
  First, we present the necessary definitions.

  \begin{figure}[!t]
    \centering
    \subfloat[~]{\label{fig:triangle1}\includegraphics[width=0.45\textwidth,valign=t]{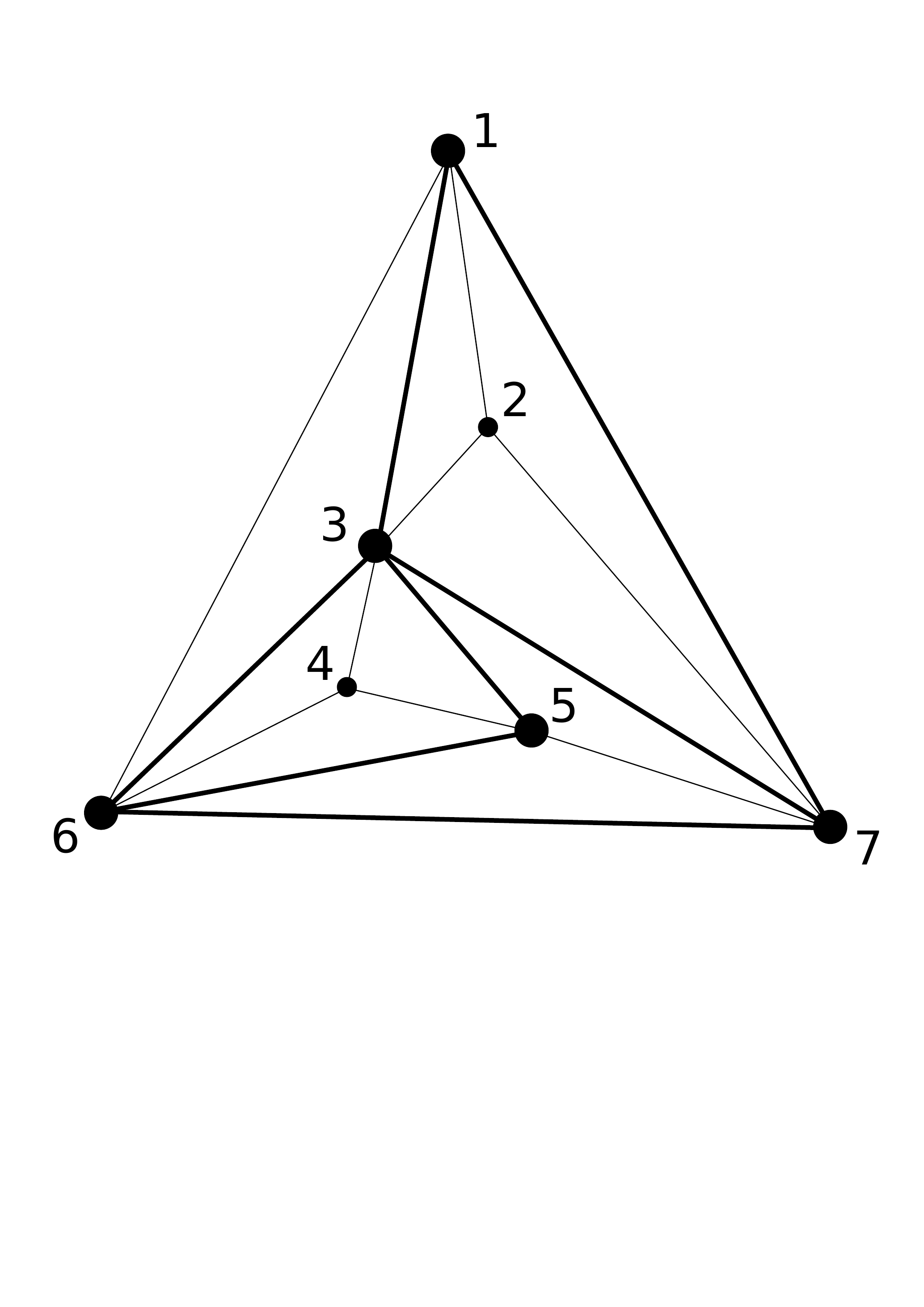}}~~
    \subfloat[~]{\label{fig:triangle2}\includegraphics[width=0.5\textwidth,valign=t]{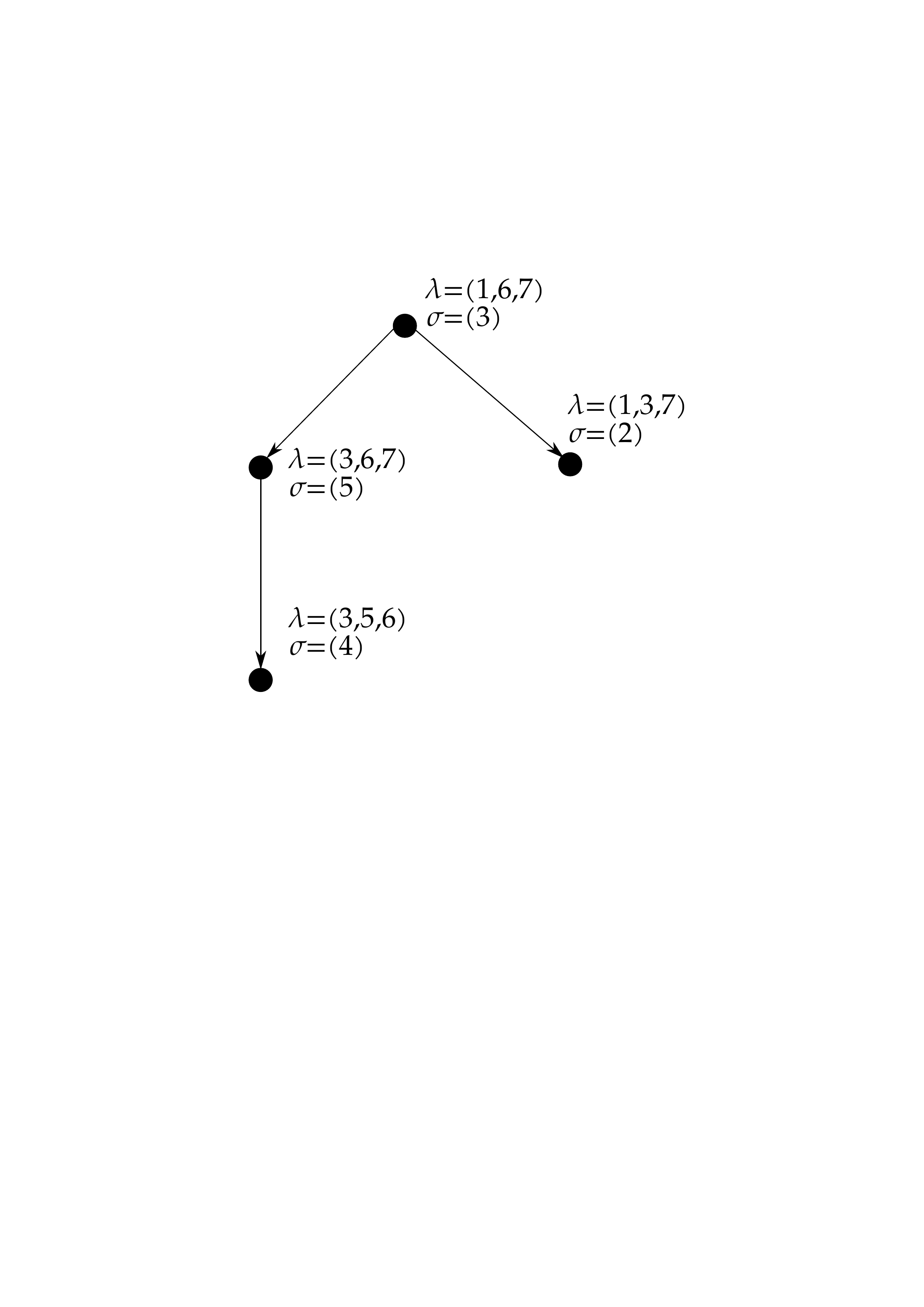}}
    \caption{A sample embedding of a maximal, planar graph with highlighted inner triangles. On the right we see the triangle hierarchy of the embedding.}
    \label{fig:triangle}
  \end{figure}

  \begin{definition}[pivotal triangles]
    A triangle in a maximal, planar graph embedding of $G$ is called {\em pivotal}, if it is an inner triangle in $G$ or an outer face of $G$.
  \end{definition}

  \begin{definition}[triangle inside]
    Let $t$ be a triangle in a maximal, planar graph $G$. The {\em inside} of $t$ is a plane fragment bounded by the edges
    of $t$ in the embedding of $G$. Edges and vertices of the triangle are not part of its inside. 
  \end{definition}

  \begin{definition}
    A triangle $t$ contains a triangle $s$ iff the inside of $s$ is a proper subset of the inside of $t$. We say that a triangle $t$
    contains a vertex $v$, if $v$ belongs to the inside of $t$.
  \end{definition}

  \begin{example}
    Let us consider the embedding presented in Figure \ref{fig:triangle1}. It consists of four pivotal triangles: the outer face
    $(1,6,7)$, and the three inner triangles $(1,3,7)$, $(3,6,7)$ and $(3,5,6)$. The triangle $(1,6,7)$ contains all others, and the triangle
    $(3,6,7)$ contains the triangle $(3,5,6)$. The triangle $(3,6,7)$ contains vertices $4$ and $5$, but does not contain the vertex $2$.
  \end{example}

  \begin{definition}
    Let $G$ be a maximal, planar graph, embedded on a plane, and let $A$ be the set of its pivotal triangles.
    We define a relation $\sqsubset$ on $A$, such that for any $t,s \in A$, $s \sqsubset t$ if and only if $t$ contains $s$.  
  \end{definition}

  By the abuse of notation we will write $v \sqsubset t$, where $v$ is a vertex, to say that a triangle $t$ contains $v$.
  Let $A$ be the set of triangles of the embedding of some maximal, planar graph. Observe that $(\sqsubset, A)$ is a partially ordered set.

  \begin{definition}[triangle hierarchy]
    Let $G$ be a maximal, planar graph, embedded on a plane, and let $A$ be the set of its pivotal triangles. Let $D=(A,F)$ be a directed graph,
    with vertices corresponding to the set of pivotal triangles $A$, and the set of edges $F$ defined as follows:
    for each $t,s \in A$, $\tuple{t,s} \in F$ iff $s \sqsubset t$ and there is no $r \in A \setminus \{t,s\}$ such that $s \sqsubset r \sqsubset t$.
    We call graph $D$ a {\em triangle hierarchy} of $G$.
  \end{definition}

  In any planar graph embedding, edges do not intersect (except at endpoints) and every triangle consists of three vertices, therefore any two triangle insides are either
  disjoint or one contains the other. Therefore a triangle hierarchy of $G$ is a tree rooted at the outer face of $G$.

  Let $D=(A,F)$ be a triangle hierarchy of some maximal, planar graph $G=(V,E)$. For each triangle $t \in A$ we define $\lambda(t)$ to be the set of vertices
  forming $t$, and $\sigma(t)=\{ v \in V \, : \, v \sqsubset t \wedge \forall_{s\in A}(s \sqsubset t \impl v \not \sqsubset s) \}$. Informally, $\sigma(t)$
  is the set of vertices, such that $t$ contains them, but no triangle lower in the hierarchy contains them.

  \begin{example}
    In Figure \ref{fig:triangle2} the triangle hierarchy of the embedding presented in Figure \ref{fig:triangle1} is given, together with the values of $\lambda$
    and $\sigma$ for each triangle in the hierarchy.
  \end{example}

  The outline of the algorithm is as follows. We would like to reduce the problem of triangle-free coloring to the triangle-free coloring without inner triangles,
  in order to use Algorithms I and II as blackboxes. Notice, that if we remove all the vertices in the insides of all inner triangles (in some embedding), we are left with
  the embedding where all triangles are faces. After this operation, we can use Algorithm I to color the graph. Next, each face (that is not the outer face) together with
  its inner vertices that has been removed earlier, induce a maximal, planar graph embedding where the outer face is already colored. Therefore we can run the algorithm
  recursively, now using Algorithm II as a blackbox. This translates to a top-to-bottom recursive algorithm on the triangle hierarchy,
  where vertices involved in the computation at each level of recursion are exactly those in $\lambda$ and $\sigma$ lists of that level in the triangle hierarchy.
  At the root, the Algorithm I is used to color the vertices, and for all the recursive calls we use Algorithm II.
  The algorithm terminates when all leafs have been processed, and therefore the entire graph has been colored.

  We now prove the necessary properties of this strategy in order to prove its correctness (Lemmas \ref{lma:h1} and \ref{lma:h2}).
  Next, we show that the size of the triangle hierarchy is linear (Lemmas \ref{lma:l1} and \ref{lma:l2}),
  and therefore we prove that the algorithm runs in a linear time. 
  In the following lemmas we assume that $D=(A,F)$ is a triangle hierarchy of some maximal, planar graph $G=(V,E)$, where $|V|>4$.
  The assumption on $|V|$ guarantees that $D$ is not empty.

  \begin{lemma}\label{lma:h1}
    Let $s \in A$, such that $s$ is not the root in $D$, and let $v \in \sigma(s)$.
    For each $t \in D$, which is an ancestor of $s$, $v \not\in \lambda(t) \cup \sigma(t)$.
  \end{lemma}

  \begin{proof}
    Let $t \in D$ be an ancestor of $s$. There exists at least one $v \in \sigma(s)$, because $t$ is pivotal, and not a root,
    therefore it is an inner triangle. Take any such $v \in \sigma(s)$. Assume by contradiction that $v \in \lambda(t) \cup \sigma(t)$.
    If $v \in \lambda(t)$, then $v$ is one of the vertices forming $t$, but because $s \sqsubset t$, we also have $v \sqsubset t$, a contradiction.
    If $v \in \sigma(t)$, then by the definition of $\sigma$: $v \sqsubset t \wedge \forall_{s\in A}(s \sqsubset t \impl v \not \sqsubset s)$, and this implies
    that $v \not \sqsubset s$, a contradiction.
  \end{proof}

  \begin{lemma}\label{lma:h2}
    For each $t \in A$, a graph embedding $G_t$ induced by the vertices $\lambda(t) \cup \sigma(t)$ is maximal, planar and has no inner triangles.
  \end{lemma}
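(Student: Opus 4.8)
The plan is to reason geometrically and to identify $G_t$ with a concrete region of the embedding: the closed disk bounded by $t$ with the open insides of all children of $t$ in $D$ deleted. Planarity is immediate, since $G_t$ is an induced subgraph of the planar graph $G$ and inherits its straight-line embedding. The first real step is to locate the drawing of $G_t$: every vertex of $\lambda(t)$ lies on the boundary of $t$ and every vertex of $\sigma(t)$ satisfies $v \sqsubset t$, so the whole of $G_t$ is drawn inside the closed region of $t$, the three edges of $t$ survive, and hence the outer face of $G_t$ is bounded exactly by the triangle $t$.

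The second step, which is the technical heart, is to understand the bounded faces. Here I would use two facts, one about $\sigma$ and one about the embedding. First, by definition $v \in \sigma(t)$ forbids $v \sqsubset r$ for every pivotal $r \sqsubset t$, so no vertex of $G_t$ lies strictly inside any child $s$ of $t$; since an edge incident to a vertex strictly inside $s$ cannot cross the boundary of $s$, deleting these interior vertices removes exactly the part of $G$ drawn inside $s$ and leaves the three boundary edges of $s$ intact. Thus each child $s$ becomes an \emph{empty} triangular face of $G_t$. Second, every face of $G$ drawn in the region lying between $t$ and its children keeps all of its incident vertices and edges and therefore survives unchanged; as $G$ is maximal these faces are triangles. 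Consequently every face of $G_t$ is a triangle (an in-between triangular face of $G$, a child triangle, or the outer triangle $t$), so $G_t$ is maximal planar.

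For the last property I would take an arbitrary triangle $T$ of $G_t$; being an induced sub-embedding, $T$ is also a triangle of $G$, hence a face of $G$, an inner triangle of $G$, or the outer face of $G$. If $T$ is a face of $G$ present in $G_t$, then by the previous step it lies between $t$ and its children and is a face of $G_t$. If $T$ is pivotal, I would argue that the only pivotal triangles all of whose vertices survive in $\lambda(t)\cup\sigma(t)$ are $t$ itself (now the outer face of $G_t$) and the children of $t$ (now empty faces): ancestors of $t$ and triangles whose inside is disjoint from that of $t$ must have a vertex lying outside the closed region of $t$, while a strictly deeper descendant $T \sqsubset s$ of a child $s$ must possess a vertex strictly inside $s$ and hence deleted. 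In every case $T$ is a face of $G_t$, so $G_t$ contains no inner triangle.

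The delicate point I expect to dwell on is this deeper-descendant case: one must rule out that $T \sqsubset s$ has all three vertices on the boundary of $s$, for that would force $\lambda(T)=\lambda(s)$ and, by uniqueness of the triangle determined by three pairwise-adjacent vertices in a fixed embedding, $T = s$, contradicting $T \sqsubset s$. This, together with the nesting dichotomy that two triangle insides are either disjoint or nested, and the local observation that every neighbourhood of a corner of a triangle meets that triangle's inside, is what makes the face classification airtight; I would state these embedding facts explicitly before assembling the argument.
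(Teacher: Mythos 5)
Your proposal is correct, but it takes a genuinely different and considerably more detailed route than the paper. For the ``no inner triangles'' part the paper argues in three lines by contradiction: an inner triangle $s$ of $G_t$ would have to contain a vertex $z$ with $z \sqsubset s$ and $z \in \sigma(s)$, and Lemma~\ref{lma:h1} (equivalently, the clause $\forall_{s\in A}(s \sqsubset t \impl v \not\sqsubset s)$ in the definition of $\sigma$) excludes any such $z$ from $\lambda(t)\cup\sigma(t)$, so $z$ cannot be a vertex of $G_t$. You instead carry out a full face analysis of $G_t$ — each child of $t$ becomes an empty triangular face, the faces of $G$ lying between $t$ and its children survive intact, and every triangle of $G_t$ is then classified as one of these faces or the outer face. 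Your approach is heavier, but it buys something the paper's proof does not actually deliver: a real justification of \emph{maximality}. The paper dispatches that claim in one sentence (``because $G$ is maximal and planar and because vertices from $\lambda(t)$ form a triangle''), which is not a proof — an induced subgraph of a maximal planar graph need not be maximal in general, and the correct reason is precisely your observation that every bounded face of $G_t$ is a triangle (the vacated children have no chords, since all three pairs of their vertices are already adjacent). Conversely, the paper's route is shorter because it reuses Lemma~\ref{lma:h1} and never needs to describe the face structure of $G_t$; your route is self-contained and does not invoke Lemma~\ref{lma:h1} at all. The embedding facts you defer (nesting dichotomy, uniqueness of the triangle on three mutually adjacent vertices, edges not crossing a triangle's boundary) are all true and are the same facts the paper silently relies on elsewhere, so deferring them is acceptable.
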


  \begin{proof}
    $G_t$ is maximal and planar, because $G$ is maximal and planar and because vertices from $\lambda(t)$ form a triangle (the outer face of $G_t$).
    Assume by contradiction, that $G_t$ contains an inner triangle $s=(u,v,w)$.
    Therefore there exists a vertex $z$, such that $z \sqsubset s$ and $z \in \sigma(s)$.
    Thus, by Lemma \ref{lma:h1}, $z \not\in \lambda(t) \cup \sigma(t)$.
    Therefore $z$ is not one of the vertices of $G_t$, a contradiction.
  \end{proof}

  \begin{lemma}\label{lma:l1}
    $|A| \leq |V|-3$.
  \end{lemma}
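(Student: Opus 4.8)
The plan is to prove the bound by a counting argument on the sets $\sigma(t)$. Specifically, I would show two facts: first, that the sets $\sigma(t)$, as $t$ ranges over all pivotal triangles in $A$, partition the vertices of $V$ lying strictly inside the root $r$ (the outer face) of the hierarchy $D$; and second, that $\sigma(t)\neq\emptyset$ for every $t\in A$. Since exactly the three vertices of $\lambda(r)$ lie on the boundary of $r$ and every other vertex lies strictly inside it, the first fact gives $\sum_{t\in A}|\sigma(t)|=|V|-3$. Combined with $|\sigma(t)|\ge 1$ for every $t$, this yields $|A|=\sum_{t\in A}1\le\sum_{t\in A}|\sigma(t)|=|V|-3$, which is exactly the claim.

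For the partition, I would first note that the pivotal triangles whose inside contains a fixed vertex $v$ are pairwise comparable under $\sqsubset$: two pivotal triangles have either nested or disjoint insides, so two triangles both containing $v$ must be nested. Hence these triangles form a finite $\sqsubset$-chain with a unique minimal element $t_v$, and from the definition of $\sigma$ one checks that $v\in\sigma(t)$ holds precisely for $t=t_v$; disjointness of the blocks along a chain is exactly what Lemma \ref{lma:h1} records. Thus $\{\sigma(t)\}_{t\in A}$ partitions the $|V|-3$ vertices strictly inside $r$.

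The heart of the argument, and the step I expect to be the main obstacle, is $\sigma(t)\neq\emptyset$ for every pivotal $t$. If $t$ is a leaf of $D$ then no pivotal triangle lies strictly inside $t$; since $t$ is either an inner triangle (hence not a face) or the outer face with $|V|>4$, at least one vertex lies strictly inside $t$, and it then belongs to $\sigma(t)$. If instead $t$ has a child $s$, then because $s\sqsubset t$ with $s\neq t$ and no vertex can lie in the interior of an edge of the straight-line embedding, at least one vertex $p$ of $\lambda(s)$ lies strictly inside $t$; I claim any such $p$ lies in $\sigma(t)$. Since $p$ is a vertex of $s$ it does not lie inside $s$, so it suffices to show $p$ lies strictly inside no sibling child of $t$.

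To establish this I would argue geometrically, using that every pivotal triangle bounds a convex triangular region, that two edges meet only at a shared endpoint, and that no vertex lies on the interior of an edge. Suppose $p$ lay strictly inside a sibling $s'$; tracing the two edges of $s$ incident to $p$ out of the convex region bounded by $s'$, each can leave that region only through a vertex of $s'$. A short case analysis then shows that either all three vertices of $s$ lie in the closed region of $s'$, or two of them are vertices of $s'$ with $p$ interior, and in every case the interior of $s$ is contained in that of $s'$, so $s\sqsubset s'$. This contradicts $s$ being a child of $t$, whose hierarchy parent would then be $s'$ or a deeper triangle rather than $t$. Hence no vertex of a child of $t$ lies inside a sibling, $p\in\sigma(t)$, and $\sigma(t)\neq\emptyset$ in all cases. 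I expect this convexity and edge-crossing case analysis to be the most delicate part; once it is in place the partition count and the final inequality follow immediately.
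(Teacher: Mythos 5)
Your proposal is correct and follows essentially the same route as the paper: bound $|A|$ by $\sum_{t\in A}|\sigma(t)|$ using the disjointness of the $\sigma(t)$ (Lemma~\ref{lma:h1}) together with $|\sigma(t)|\geq 1$, and observe that the sum is at most $|V|-3$. The only difference is that you supply a detailed geometric justification for $\sigma(t)\neq\emptyset$ (leaf versus non-leaf case, and the fact that a vertex of a child cannot lie inside a sibling), a step the paper dispatches with the one-line assertion that $t$ is pivotal; your elaboration is correct and arguably fills a gap the paper leaves implicit.
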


  \begin{proof}
    By Lemma \ref{lma:h1}, for each vertex $v \in V$ that is not on the outer face there exists at most one $t \in A$, such that $v \in \sigma(t)$.
    Therefore $\sum_{t\in A} |\sigma(t)| \leq |V|-3$. Additionally, each triangle $t \in A$ is pivotal, therefore $|\sigma(t)|>0$, which completes the proof.    
  \end{proof}

  \begin{lemma}\label{lma:l2}
    $\sum_{t \in A} |\lambda(t) \cup \sigma(t)| \leq 4|V| - 12$.
  \end{lemma}

  \begin{proof}
    By Lemma \ref{lma:l1}, $|A| \leq |V|-3$. For each triangle $t \in A$, $|\lambda(t)|=3$, therefore $\sum_{t\in A} |\lambda(t)| \leq 3|V|-9$.
    From the proof of Lemma \ref{lma:l1} we know that $\sum_{t\in A} |\sigma(t)| \leq |V|-3$. Combining those two bounds completes the proof.
  \end{proof}

  \begin{algorithm}[t!]
    \caption{Algorithm III - recursive procedure}\label{alg:3}
    \begin{algorithmic}[1]
      \Require {Maximal, planar graph $G=(V,E)$, embedded on a plane, and its triangle hierarchy $D=(A,F)$, where $t \in A$ is the root.
        The partial coloring $c$ of $G$, where only vertices from $\lambda(t) \cup \sigma(t)$ are colored.}
      \Ensure{Triangle-free 2-coloring of $G$.}
      \If {$|V| \leq 4$}
        \Return $c$
      \EndIf
      \ForAll {$\tuple{t,s} \in F$}
        \State Let $G_s$ be the graph embedding induced by $\lambda(s) \cup \sigma(s)$.
        \State Run Algorithm II on $G_s$ and $c$, and let $c_s$ be the resulting coloring.
        \State Recursively run this procedure on $G'_s$ induced by $s$ and its inside vertices, 
          the triangle hierarchy rooted at $s$, and the partial coloring of its outer face set by $c_s$.
          Let $c'_s$ be the coloring returned by this recursive call.
      \EndFor
      \State \Return A combined coloring of $c$ and all $c'_s$'s, for each $\tuple{t,s} \in F$.
    \end{algorithmic}
  \end{algorithm}

  We now combine every result presented so far into the final algorithm, which we will simply call Algorithm III.
  Let $G=(V,E)$ be a maximal, planar graph embedded on a plane. Do the following:

  \begin{enumerate}
    \item Construct a triangle hierarchy $D=(A,F)$ of $G$.
    \item Let $t \in A$ be the root of $D$. Run Algorithm I on $\lambda(t) \cup \sigma(t)$ and let $c$ be the resulting coloring.
    \item Run Procedure \ref{alg:3} on $G$, $D$ and $c$. Return the resulting coloring.
  \end{enumerate}

  \begin{theorem}
    Given a maximal planar graph $G=(V,E)$ we can find a triangle-free 2-coloring for $G$ in $O(n)$ time.
  \end{theorem}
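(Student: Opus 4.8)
The plan is to prove the theorem by combining the three algorithms along the triangle hierarchy $D=(A,F)$, arguing correctness by induction on the depth of the recursion, and then bounding the total running time using Lemmas \ref{lma:l1} and \ref{lma:l2}. First I would set up the preprocessing: construct the hierarchy $D$ in $O(n)$ time, color the root's local graph $G_{\text{root}}$ (induced by $\lambda(\text{root})\cup\sigma(\text{root})$) with Algorithm I, and then invoke Procedure \ref{alg:3}. By Lemma \ref{lma:h2}, each local graph $G_t$ is maximal, planar and free of inner triangles, so Algorithm I (at the root) and Algorithm II (at every internal node) are applicable, and Theorems \ref{thm:1} and \ref{thm:2} produce triangle-free colorings of the individual $G_t$'s.

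The core of the correctness argument is to verify the precondition of Algorithm II at every recursive call and to show that every triangle of $G$ is captured by some $G_t$. For a node $s$ with parent $t$, I would first observe that every vertex of $\lambda(s)$ lies in $\lambda(t)\cup\sigma(t)$: a vertex of the triangle $s$ either coincides with a vertex of $t$, or it lies strictly inside $t$ but, being a corner of $s$, cannot lie in the inside of any triangle properly below $t$ (insides of sibling triangles are disjoint open regions, and descendants' insides avoid the corners of $s$), so it belongs to $\sigma(t)$. Hence $\lambda(s)$ is already colored when the call on $s$ begins, and $\lambda(s)$ is one of the faces of $G_t$; by the inductive hypothesis the coloring of $G_t$ is triangle-free, so $\lambda(s)$ is not monochromatic, which for a $2$-coloring means exactly two of its vertices share a color -- precisely the hypothesis $c(u)=c(v)=\overline{c(w)}$ that Algorithm II requires.

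With the precondition secured, I would argue triangle-freeness globally. By Lemma \ref{lma:h1} every non-outer vertex lies in exactly one $\sigma(t)$, so the recursion colors $V$ consistently and assigns each vertex once. Every triangle $T$ of $G$ is either the outer face (handled at the root), an inner triangle (hence a non-root node of $A$, appearing as the outer face $\lambda(s)$ of $G_s$ and, by the paragraph above, non-monochromatic), or an inner face, in which case it appears as a face of the unique $G_t$ whose region directly contains it and whose three vertices all lie in $\lambda(t)\cup\sigma(t)$. In each case $T$ is a triangle inside some $G_t$ whose coloring is triangle-free by Theorems \ref{thm:1} and \ref{thm:2}, so $T$ is not monochromatic; hence the combined coloring is triangle-free.

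For the running time, the decisive ingredient is Lemma \ref{lma:l2}: the work at node $t$ is dominated by running Algorithm I or II on $G_t$, which by Theorems \ref{thm:1} and \ref{thm:2} costs $O(|\lambda(t)\cup\sigma(t)|)$, and summing over all nodes gives $O\!\left(\sum_{t\in A}|\lambda(t)\cup\sigma(t)|\right)=O(4|V|-12)=O(n)$. Since $|A|\le|V|-3=O(n)$ by Lemma \ref{lma:l1}, the recursion tree has $O(n)$ nodes and the bookkeeping (identifying children, attaching $\sigma(s)$ to $\lambda(s)$, and combining the colorings) is linear as well. I expect the main obstacle to be the implementation details rather than the structural argument: in particular, constructing the triangle hierarchy $D$ in $O(n)$ time -- locating all inner (separating) triangles and their nesting -- must be justified carefully, as must the claim that the subproblems, although they share the boundary vertices $\lambda(\cdot)$, do not blow the total size past the $4|V|-12$ bound. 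The latter is exactly what Lemma \ref{lma:l2} rescues, by charging each vertex to a single $\sigma(t)$ while paying only the constant $|\lambda(t)|=3$ overhead per node.
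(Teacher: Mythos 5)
Your proposal follows essentially the same route as the paper's own proof: recurse on the triangle hierarchy, apply Theorem~\ref{thm:1} at the root and Theorem~\ref{thm:2} at every non-root node (justified by Lemma~\ref{lma:h2}), and bound the total work via Lemmas~\ref{lma:l1} and~\ref{lma:l2}. If anything, you are more explicit than the paper on two points it leaves implicit --- that $\lambda(s)\subseteq\lambda(t)\cup\sigma(t)$, so Algorithm~II's precondition $c(u)=c(v)=\overline{c(w)}$ is actually met at each recursive call, and that every triangle of $G$ lies entirely within a single $G_t$, so combining the per-node colorings cannot create a cross-level monochromatic triangle --- so there are no gaps.
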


  \begin{proof}
    Use Algorithm III to color $G$. We now prove that the resulting coloring is indeed triangle-free. By Lemma \ref{lma:h2}, the graph induced by $\lambda(t) \cup \sigma(t)$
    is maximal, planar, and has no inner triangles. Therefore by Theorem \ref{thm:1}, the coloring $c$ is triangle-free. We now prove that Step 3 returns a triangle-free
    coloring. We do this by induction on the height $h$ of $D$. Assume again that $|V|>4$, thus $D$ is not empty (the other case is trivial).

    If $h=0$, then $D$ has only one element, which means that there are no inner triangles in $G$,
    therefore all vertices of $G$ are already colored before running Procedure \ref{alg:3}.
    Take any $h>0$ and assume that for all $h'<h$, if $D$ is of height $h'$, then
    Step 3 returns a triangle-free coloring. Let $D$ be of height $h$.
    Take any $\tuple{t,s} \in F$. The graph $G_s$ is maximal, planar, and has no inner triangles due to Lemma \ref{lma:h2}.
    Thus, by Theorem \ref{thm:2}, $c_s$ is triangle-free, therefore by induction hypothesis $c'_s$ is triangle-free.
    Since $c$ is triangle-free, extending it with all $c'_s$'s (for each $\tuple{t,s} \in F$) produces a triangle-free coloring of $G$.
    Finally, the only way we color vertices in Algorithm III is by running Algorithms I and II, therefore the returned coloring is a 2-coloring.

    We now prove the time complexity. A triangle hierarchy can be constructed in $O(n)$ time,
    by first computing all pivotal triangles in $O(n)$ time \cite{chiba1985arboricity}, and then traversing the graph using a DFS method on a DCEL structure,
    starting from a vertex of the outer face and remembering which pivotal triangle we are currently in.
    By Theorem \ref{thm:1}, Step 2 runs in $O(n)$ time. The running time of Step 3 is also bounded by $O(n)$, due to
    Theorem \ref{thm:2} and Lemmas \ref{lma:l1} and \ref{lma:l2}.
  \end{proof}

  \section{Conclusions}

  In this paper we presented a linear-time algorithm for finding a triangle-free 2-coloring of any planar graph.
  Thus, we have positively answered one of the questions posed in our previous paper \cite{karpinski2018vertex}.
  This result proves that the triangle-free coloring problem is easier than the classic coloring problem, and
  can possibly shed the light on the latter's complexity, which interests the graph theory community to this day. 

  For further research, there are still open problems to be solved. For the positive result one can look for an algorithm
  that finds $\chi_3(G)$ in graphs with $\Delta(G) \geq 5$ (where $\Delta(G)$ is the largest vertex degree of $G$).
  For the negative side, one can look for the smallest $\Delta(G)$, for which the
  triangle-free $k$-coloring problem is $\mathcal{NP}$-hard.

  {\printbibliography}

\end{document}